\def\BibTeX{{\rm B\kern-.05em{\sc i\kern-.025em b}\kern-.08em
    T\kern-.1667em\lower.7ex\hbox{E}\kern-.125emX}}
\theoremstyle{definition}
\newtheorem{theorem}{Theorem}
\theoremstyle{definition}
\theoremstyle{definition}
\newtheorem{corollary}{Corollary}
\theoremstyle{definition}
\newtheorem{proposition}{Proposition}
\theoremstyle{definition}
\theoremstyle{definition}
\begin{document}

\title{Large-Signal Stability Guarantees for \\Cycle-by-Cycle Controlled DC-DC Converters}

\author{\IEEEauthorblockN{Xiaofan Cui and Al-Thaddeus Avestruz}
\thanks{*Xiaofan Cui and Al-Thaddeus Avestruz are with the Department
of Electrical Engineering and Computer Science, University
of Michigan, Ann Arbor, MI 48109, USA cuixf@umich.edu, avestruz@umich.edu.}
}
\IEEEoverridecommandlockouts
\IEEEpubid{\makebox[\columnwidth]{\hfill} \hspace{\columnsep}\makebox[\columnwidth]{ }}
\maketitle
\IEEEpubidadjcol

\begin{abstract}
\hspace{0pt} Stability guarantees are critical for cycle-by-cycle controlled dc-dc converters in consumer electronics and energy storage systems. Traditional stability analysis on \mbox{cycle-by-cycle} dc-dc converters is incomplete because the inductor current ramps are considered fixed; but instead, inductor ramps are not fixed because they are dependent on the output voltage in \mbox{large-signal} transients. 
We demonstrate a new \mbox{large-signal} stability theory which treats \mbox{cycle-by-cycle controlled} dc-dc converters as a particular type of feedback interconnection system.
An analytical and practical stability criterion is provided based on this system.
The criterion indicates that the $L/R$ and $RC$ time constants are the design parameters which determine the amount of coupling between the current ramp and the output voltage. 
\end{abstract}

\section{Introduction}
Cycle-by-cycle controlled dc\nobreakdash-dc converters are widely used in PoL\,(point-of-load) regulators \cite{Lee2013}, VRMs\,(voltage regulation modules) \cite{Svikovic2015}, battery chargers \cite{Yilmaz2013}, and LiDAR power supplies \cite{Cui2019c} because of its faster transient response compared to the traditional averaging-based control \cite{erickson2007}. However, the well\nobreakdash-known averaging theory cannot model the fast switching\nobreakdash-frequency\nobreakdash-scale dynamics of cycle\nobreakdash-by\nobreakdash-cycle controlled dc\nobreakdash-dc converters due to the slow-varying perturbation assumption. The switching-synchronized sampled\nobreakdash-state space (\emph{5S}) model is an accurate and tractable alternative \cite{Cui2018a}.

In comparison to the bilinear nonlinearity in the averaging model, the converter model in $\emph{5S}$ illustrates more complicated nonlinear behaviors.
While the small\nobreakdash-signal stability in $\emph{5S}$ has been widely discussed in \cite{Cui2018a}, the large\nobreakdash-signal stability of cycle-by-cycle controlled dc\nobreakdash-dc converters has not been adequately addressed. 

This paper focuses on one of the most widely-used cycle\nobreakdash-by\nobreakdash-cycle controlled dc\nobreakdash-dc converters --- current-mode dc-dc converters, which have a number of varieties including constant on\nobreakdash-time control \cite{Cui2019}, constant off\nobreakdash-time control \cite{xiaofanacc2019}, and fixed\nobreakdash-frequency peak current control \cite{Ding2020}.
In the existing large\nobreakdash-signal stability analysis for current\nobreakdash-mode converters, the current block and voltage block shown in Fig.\,\ref{fig:cmc_modeling} in current-mode dc-dc converters are considered decoupled and are designed separately.
The stability of the current block, which is referred to ``fast\nobreakdash-scale stability'', was studied by assuming the rising ramp and falling ramp of the inductor current are fixed as $m_1$ and $m_2$ \cite{Redl1981a}.
The stability of the voltage block, which is referred to ``slow\nobreakdash-scale stability'', was studied by utilizing averaging theory because of negligible voltage ripple and treating the current block as a controlled current source.

However, during large-signal transients, the output voltage significantly changes, hence the inductor current ramp changes every switching cycle. The cycle\nobreakdash-varying inductor current ramp affects the amount of charge pumped into the output, and ultimately affects the output voltage dynamics.
The traditional large\nobreakdash-signal stability analysis of the current block, which fully neglects the voltage block and assumes a fixed inductor current ramp, cannot guarantee the stability of the current block for large\nobreakdash-signal transients.

To address this deficiency in the large\nobreakdash-signal stability theory of current\nobreakdash-mode dc\nobreakdash-dc converters, we develop a new large\nobreakdash-signal stability theory which models the current\nobreakdash-mode buck converter as a feedback connection system in \emph{5S} shown in Fig.\,\ref{fig:cmc_modeling}.
Several discrete-time robust control tools, including small-gain theorem, dissipativity theory, and Lure system theory are utilized to rigorously study the stability of the resulting discrete-time nonlinear system.

This paper is organized as the following: 
(i) Section I introduces the paper; 
(ii) Section II develops the large-signal models for the current block and voltage block of a current-mode buck converter using constant on-time control;
(iii) the ultimate goal, which is illustrated in Section III, is to derive the large-signal stability guarantees for current mode dc-dc converters;
(iv) Section IV concludes the paper.
\begin{figure}[ht]
    \centering
    \includegraphics[width = 8cm]{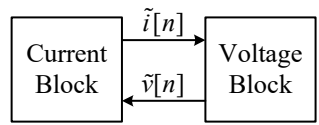}
    \caption{The current block and voltage block are coupled if the inductor current ramp is cycle-varying.}
    \label{fig:cmc_modeling}
\end{figure}
\section{\emph{5S} Modeling of Constant On-Time\\ Buck Converter}
\subsection{Converters, Systems, and Definitions}
Consider a class $\Sigma$ buck converter \cite{Cui2018a} using constant on-time current-mode control is illustrated in Fig.\,\ref{fig:schematicebuck_csl}.
\begin{figure}
    \centering
    \includegraphics[width = 7cm]{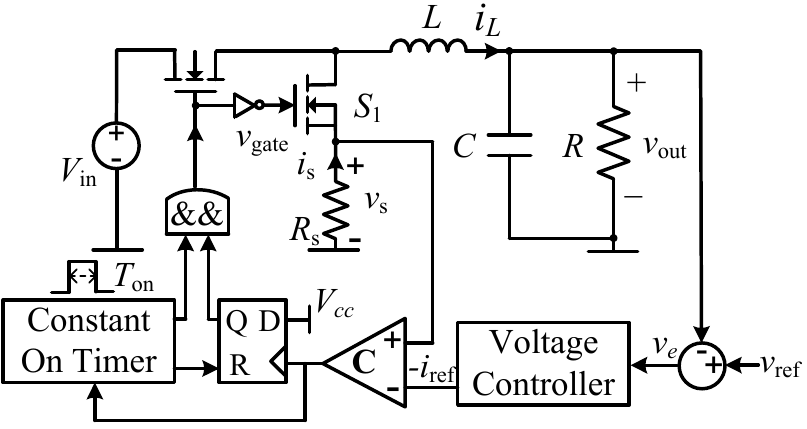}
    \caption{Schematic diagram of a digitally-controlled current-mode constant on-time buck converter.}
    \label{fig:schematicebuck_csl}
\end{figure}
The inductor current and capacitor voltage trajectories are shown in Fig.\,\ref{fig:cotcm_buck}. $V_{\text{in}}$ and $T_{\text{on}}$ are the input voltage and constant on time, respectively. According to the cycle-by-cycle control law in \cite{Cui2018a}, the output voltage is sampled during the on-time, once per switching cycle. The sampling time point for $v[n]$ can be expressed as the convex combination of the time of the inductor current valley $t_v[n]$ and the time for the inductor current peak $t_p[n]$.
\begin{figure}
    \centering
    \includegraphics[width = 6cm]{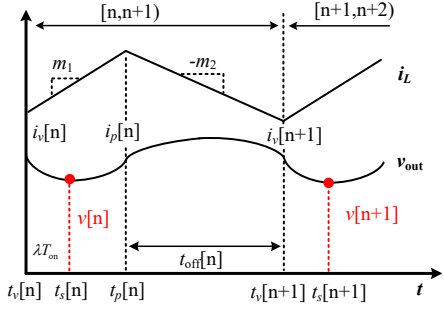}
    \caption{Constant on-time buck converter using cycle-by-cycle current-mode control.}
    \label{fig:cotcm_buck}
\end{figure}
\begin{align}
t_s[n] = \lambda t_v[n-1] + (1-\lambda)t_p[n],
\end{align}
the parameter $\lambda$ can be chosen to be from 0 to 1.

The slopes of the rising and falling ramps of the inductor current are denoted by $m_1[n]$ and $m_2[n]$
\begin{align} 
    m_1[n] &= \frac{V_{\text{in}}-v[n]}{L},  \quad m_2[n] = \frac{v[n]}{L}. \label{eqn:m1m2n}
\end{align}

We introduce the one-cycle-delayed valley current sequence $\{i_v^p[n]\}$ as 
\begin{align} 
i_v^p[n+1] \triangleq i_v[n]. 
\end{align}

We denote the equilibrium of the system by $v_{out}[n] = V_{\text{out}}$, $i_v[n] = I_{v}$, $i_v^p[n] = I_{v}$, and $t_\text{off}[n] = T_{\text{off}}$. The interference signal in the inductor current measurement is denoted by $w(t)$ \cite{cmpartone2022}. The equilibrium is defined by the following equations:
\begin{subequations}
\begin{align}
    I_v & \triangleq \frac{V_{\text{out}}}{R} - \frac{1}{2}\frac{V_{\text{in}}-V_{\text{out}}}{L}\,T_{\text{on}}, \quad I_v  = I_c - w(T_{\text{off}}), \label{eqn: equlibrium1}  \\
    T_{\text{off}} & \triangleq \frac{V_{\text{in}}-V_{\text{out}}}{V_{\text{out}}}\, T_{\text{on}}, \quad T_{s}^{\text{ss}} \triangleq T_{\text{on}} + T_{\text{off}} , \label{eqn: equlibrium2}
\end{align}
\end{subequations}
\vspace{-10pt}
\subsection{Current Block Modeling}
According to \cite{cmpartone2022}, the large-signal dynamical model (translated to the origin) of the current block follows
\begin{subequations} \label{eqn:buck_i_tilde_dyn}
\begin{align}
    \tilde{i}_v^{p}[n+1] & = \tilde{i}_v^{p}[n] - \frac{\tilde{v}[n]}{L}\,\,T_{\text{on}} - \frac{\tilde{v}[n]}{L}\,\,T_{\text{off}} - \frac{v[n]}{L}\,\,\tilde{t}_{\text{off}}[n], \\
    \tilde{i}_v^{p}[n+1]  & = \tilde{i}_c[n] - \psi(\tilde{t}_{\text{off}}[n]),
\end{align}
\end{subequations}
where the translated variables
$\tilde{v}[n]$, $\tilde{i}_v^p[n]$, $\tilde{t}_{\text{off}}[n]$, and $\psi$ satisfy
\begin{subequations}
    \begin{align}
        \tilde{v}[n] &= v[n] - V_{\text{out}}, \label{eqn:v_dev}  \quad
        \tilde{i}_v^p[n] = i_v^p[n] - I_v, \\
        \tilde{t}_{\text{off}}[n] &= t_{\text{off}}[n] - T_{\text{off}}, \quad
        \psi(x) = w(x+T_{\text{off}}) - w(x). \label{eqn:psi} 
    \end{align}
\end{subequations}
From (\ref{eqn:buck_i_tilde_dyn}), the current control block diagram can be represented by Fig.\,\ref{fig:Current_block}.
\begin{figure}[ht]
    \centering
    \includegraphics[width=7cm]{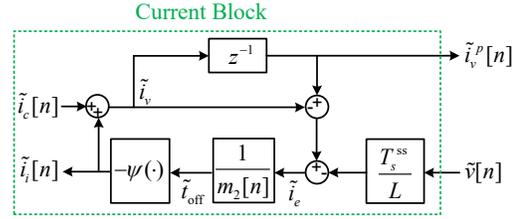}
    \caption{Current block diagram of current-mode buck converter using constant on-time.}
    \label{fig:Current_block}
\end{figure}
\vspace{-10pt}
\subsection{Voltage Block Modeling}
From \cite{Cui2018a}, constant on-time current-mode buck converters have the following two properties:
(1) the output $RC$-filter time constant is much greater than the switching period; 
(2) the output voltage has a small ripple so the inductor current can be considered as a cycle-varying piecewise linear (ramp) waveform. Property (1) implies
\begin{align} \label{eqn:assumption_1}
    \frac{T_s[n]}{RC} \ll 1,
\end{align}
where $T_s$ is the switching period at the $n^{\text{th}}$ switching cycle. Property (2) implies that the {\em{quasi-steady state discharging}} of the capacitor, i.e. the discharging current to the load, can be treated as constant throughout a given switching cycle, and the {\em{small output-voltage ripple}}
\begin{align} \label{eqn:assumption_2}
    \frac{T_{s}[n]\,T_{\text{on}}}{2LC} & \ll 1.
\end{align}
The large-signal dynamical model (translated to the origin) of the voltage block follows \cite{Cui2018a}:
\begin{subequations} \label{eqn:buck_tilde_dyn}
\begin{align}
    \tilde{v}[n+1] = \,\,& \tilde{v}[n] + \frac{1}{C}\left(\sum_{j=1}^3\tilde{Q}^{(j)}_{\text{in}}[n] - \tilde{Q}_{\text{out}}[n]\right), \label{eqn:tildev_dyn} \\
    \tilde{Q}^{(1)}_{\text{in}}[n] = \,\,& (1-\lambda) T_{\text{on}} \tilde{i}_v^p[n] - \frac{1 - \lambda^2}{2} \frac{\tilde{v}[n]}{L}\; T^2_{\text{on}}, \label{eqn:tildeqin1}\\
    \tilde{Q}^{(2)}_{\text{in}}[n] = \,\,& \frac{1}{2}\left( \tilde{i}_v^p[n] -\frac{\tilde{v}[n]}{L}\;T_{\text{on}} +  \tilde{i}_v^p[n+1] \right)t_{\text{off}}[n] \nonumber \\
    &+ \left(I_v + \frac{M_1T_{\text{on}}}{2}\right)\tilde{t}_{\text{off}}[n], \label{eqn:tildeqin2} 
\end{align}
\begin{align}
    \tilde{Q}^{(3)}_{\text{in}}[n] = \,\,& \lambda T_{\text{on}}  i_v^p[n+1] - \frac{\lambda^2}{2}  \frac{\tilde{v}[n]}{L} \; T^2_{\text{on}},  \label{eqn:tildeqin3}\\
    \tilde{Q}_{\text{out}}[n] = \,\,& \frac{\tilde{v}[n]}{R}\left( T_{\text{on}} + t_{\text{off}}[n] \right) +\frac{V_{\text{out}}}{R} \;\tilde{t}_{\text{off}}[n]. \label{eqn:tildeqout}
\end{align}
\end{subequations}
To prevent the switching transient from disturbing the valley current detection denoted by the sense voltage $v_s$ in Fig.\,\ref{fig:schematicebuck_csl}, the time-varying off-time is bounded from below \cite{Cui2018a}. To avoid the misdetection of the valley current event, the time-varying off-time is bounded from above by
\begin{align} \label{eqn:toff_bd}
    T^{\text{min}}_{\text{off}} \le t_{\text{off}}[n] \le T^{\text{max}}_{\text{off}}.
\end{align}
\section{Stability and Control Performance Analysis of Constant On-Time Buck Converters}
\subsection{Current Block}
To calculate the $\mathcal{L}_2$ gain from $\tilde{v}[n]$ to $\tilde{i}_v^p[n]$, denoted by $\Gamma_{v \to i}$, we introduce the following dissipativity theory.
\begin{figure}
    \centering
    \includegraphics[width = 3.5cm]{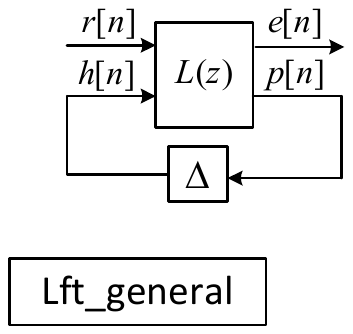}
    \caption{Linear fractional transformed system.}
    \label{fig:Lft_general}
\end{figure}
If the system can be expressed in $F_u(L,\Delta)$ form \cite{Boyd1994}, as shown in Fig.\,\ref{fig:Lft_general}, then
the state-space representation of $F_u(L,\Delta)$ is
\begin{subequations}
    \begin{align}
        x[n+1] & = A x[n] + B_1 h[n] + B_2 r[n], \\
        p[n] & = C_1 x[n] + D_{11} h[n] + D_{12} r[n], \\
        e[n] & =  C_2 x[n] + D_{21} h[n] + D_{22} r[n], \\
        h[n] & = \Delta \, \left( p[n] \right),
    \end{align}
\end{subequations}
the upper bound of the $\mathcal{L}_2$ gain can be calculated as
\begin{theorem} \label{theorem:lure_gain}
Assume $\Delta: \mathbb{R} \rightarrow \mathbb{R}$ is an $[\hat{\alpha}, \hat{\beta}]$ sector-bounded nonlinearity. Also, assume $F_u(L,\Delta)$ is well-posed. If $\exists \, P > 0$, $\lambda \ge 0$, and $\hat{\gamma} > 0$ such that
\begin{align} \label{eqn:LMI_constraint}
    & \begin{bmatrix}
        A^TPA - P &   A^TPB_1 &   A^TPB_2 \\
        B_1^TPA                & B_1^TPB_1 &   B_1^TPB_2 \\
        B_2^TPA                & B_2^TPB_1 &   B_2^TPB_2 - \hat{\gamma}^2 I
    \end{bmatrix} \nonumber \\
    & + \lambda \begin{bmatrix}
    C_1 & D_{11} & D_{12}\\
    0   & I      & 0  
    \end{bmatrix}^T 
    \begin{bmatrix}
    -\hat{\alpha} \hat{\beta} &  \frac{\hat{\alpha} + \hat{\beta}}{2} \\ 
    \frac{\hat{\alpha} + \hat{\beta}}{2} & -1
    \end{bmatrix} 
    \begin{bmatrix}
    C_1 & D_{11} & D_{12}  \\
    0   & I      & 0
    \end{bmatrix} \nonumber \\
    & + \begin{bmatrix}
    C_2^T \\
    D_{21}^T \\
    D_{22}^T 
    \end{bmatrix} 
    \begin{bmatrix}
    C_2 & D_{12} & D_{22}
    \end{bmatrix}
    <0,
\end{align}
then 
$\|F_u(L,\Delta)\| \le \hat{\gamma}$.
\end{theorem}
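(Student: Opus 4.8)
The plan is to prove the gain bound by a dissipation (storage-function) argument: take $V(x) = x^{T}Px$ as the storage function and absorb the sector constraint on $\Delta$ through the S-procedure with the multiplier $\lambda \ge 0$. Well-posedness of $F_u(L,\Delta)$ is invoked first, since it guarantees that for the prescribed (zero) initial state and any square-summable input $\{r[n]\}$ the loop equations $p[n] = C_1 x[n] + D_{11}h[n] + D_{12}r[n]$ and $h[n] = \Delta(p[n])$ possess a unique solution $h[n]$, so that $\{x[n]\}$, $\{p[n]\}$, $\{h[n]\}$, $\{e[n]\}$ are all well defined and $r \mapsto e$ is a genuine operator; only then does $\|F_u(L,\Delta)\|$ make sense.

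Next I would turn the sector bound into a quadratic inequality along trajectories. Since $\Delta$ is $[\hat{\alpha},\hat{\beta}]$-sector-bounded, $(h[n]-\hat{\alpha}\,p[n])(\hat{\beta}\,p[n]-h[n]) \ge 0$, i.e. $\begin{bmatrix} p[n] \\ h[n]\end{bmatrix}^{T}\begin{bmatrix} -\hat{\alpha}\hat{\beta} & \frac{\hat{\alpha}+\hat{\beta}}{2} \\ \frac{\hat{\alpha}+\hat{\beta}}{2} & -1\end{bmatrix}\begin{bmatrix} p[n] \\ h[n]\end{bmatrix} \ge 0$ for every $n$; writing $\begin{bmatrix} p[n] \\ h[n]\end{bmatrix} = \begin{bmatrix} C_1 & D_{11} & D_{12} \\ 0 & I & 0\end{bmatrix}\begin{bmatrix} x[n] \\ h[n] \\ r[n]\end{bmatrix}$ shows this is exactly the middle quadratic form appearing in (\ref{eqn:LMI_constraint}). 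I would then left- and right-multiply (\ref{eqn:LMI_constraint}) by $\zeta[n]^{T}$ and $\zeta[n]$, where $\zeta[n] \triangleq [\,x[n]^{T}\ \ h[n]^{T}\ \ r[n]^{T}\,]^{T}$: using $x[n+1] = Ax[n]+B_1 h[n]+B_2 r[n]$, the first block returns $V(x[n+1]) - V(x[n]) - \hat{\gamma}^{2}\|r[n]\|^{2}$; the $\lambda$-weighted block returns a nonnegative quantity by the sector inequality and $\lambda \ge 0$; and the rank-one block returns $\|e[n]\|^{2}$ with $e[n] = C_2 x[n] + D_{21}h[n] + D_{22}r[n]$. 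Hence the strict matrix inequality yields the dissipation inequality $V(x[n+1]) - V(x[n]) \le \hat{\gamma}^{2}\|r[n]\|^{2} - \|e[n]\|^{2}$ for all $n$.

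Finally I would telescope: summing over $n = 0,\dots,N-1$ gives $V(x[N]) - V(x[0]) \le \hat{\gamma}^{2}\sum_{n=0}^{N-1}\|r[n]\|^{2} - \sum_{n=0}^{N-1}\|e[n]\|^{2}$; with $x[0]=0$, so $V(x[0])=0$, and $V(x[N]) \ge 0$ because $P > 0$, we obtain $\sum_{n=0}^{N-1}\|e[n]\|^{2} \le \hat{\gamma}^{2}\sum_{n=0}^{N-1}\|r[n]\|^{2}$, and letting $N\to\infty$ gives $\|e\|_2 \le \hat{\gamma}\,\|r\|_2$, i.e. $\|F_u(L,\Delta)\| \le \hat{\gamma}$. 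The calculation is essentially bookkeeping; the points requiring care are (i) invoking well-posedness so that the interconnection is a bona fide input--output map before one speaks of its gain, and (ii) tracking the sign convention of the sector multiplier so that the S-procedure term enters with a nonnegative sign — in particular, the rank-one block must reproduce $\|e[n]\|^{2}$ exactly, so the $D_{12}$ written in its second factor should read $D_{21}$.
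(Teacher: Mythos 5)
Your proof is correct and follows essentially the same route as the paper: a quadratic storage function $V(x)=x^{T}Px$, S-procedure absorption of the sector constraint with multiplier $\lambda\ge 0$, left/right multiplication of the LMI by the trajectory vector, and a telescoping sum using $x[0]=0$ and $P>0$. You in fact tighten two slips in the paper's version --- the multiplication vector should indeed be $\left[x[n]^{T}\ h[n]^{T}\ r[n]^{T}\right]^{T}$ (the paper writes $p[n]$ in place of $h[n]$), and the $D_{12}$ in the final rank-one factor should read $D_{21}$ so that the block reproduces $e[n]^{T}e[n]$ --- so your bookkeeping is the more careful of the two.
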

\begin{proof}
Let $r \in L_2$ be any input and assume $x[0] = 0$. Let $(x, p, h, e)$ be the resulting solutions for this input $r$.

Multiply the left and right of (\ref{eqn:LMI_constraint}) by $\left[x[n]^T\,\,p[n]^T\,\,r[n]^T\right]$ and its transpose to obtain
\begin{align} \label{eqn:dissipation_ineqn}
& V(x[n+1]) - V(x[n]) - \hat{\gamma}^2 r[n]^Tr[n] \nonumber \\ 
&+ \Lambda \begin{bmatrix}
    p[n] \\
    h[n] 
    \end{bmatrix}^T
    \begin{bmatrix}
    -\hat{\alpha} \hat{\beta} &  \frac{\hat{\alpha} + \hat{\beta}}{2} \\ 
    \frac{\hat{\alpha} + \hat{\beta}}{2} & -1
    \end{bmatrix} 
    \begin{bmatrix}
        p[n] \\
        h[n] 
        \end{bmatrix} + e[n]^Te[n] < 0,
\end{align}
where the storage function $V(x)$ is defined as $V(x)\triangleq x^T P x$.

By summing up both sides of (\ref{eqn:dissipation_ineqn}) from $n=0$ to $n = N$, we have
\begin{align} \label{eqn:dissipation_ineqn_sum}
& V(x[N+1]) - V(x[0]) - \hat{\gamma}^2 \sum_{n=0}^{N} r[n]^Tr[n] \nonumber \\ 
&+ \Lambda \sum_{n=0}^{N} \begin{bmatrix}
    p[n] \\
    h[n] 
    \end{bmatrix}^T
    \begin{bmatrix}
    -\hat{\alpha} \hat{\beta} &  \frac{\hat{\alpha} + \hat{\beta}}{2} \\ 
    \frac{\hat{\alpha} + \hat{\beta}}{2} & -1
    \end{bmatrix} 
    \begin{bmatrix}
        p[n] \\
        h[n] 
        \end{bmatrix} + \sum_{n=0}^{N} e[n]^Te[n] < 0.
\end{align}
(\ref{eqn:dissipation_ineqn_sum}) can be equivalently expressed as 
\begin{align} \label{eqn:dissipation_ineqn_sum2}
& V(x[N+1]) - V(x[0]) + \sum_{n=0}^{N} e[n]^Te[n] -\sum_{n=0}^{N} \hat{\gamma}^2 r[n]^Tr[n] \nonumber \\ 
&+ \Lambda \sum_{n=0}^{N}\left(h[n] - \hat{\alpha} p[n]\right)\left(\hat{\beta} p[n] - h[n]\right) < 0.
\end{align}
$V(x[N+1]) \ge 0$ because $P>0$. $V(x[0])=0$ because $x[0]=0$. $h[n] \ge \hat{\alpha} p[n]$ and $\hat{\beta} p[n] \le h[n]$ because of the sector-bounded nonlinearity. 
The last term in (\ref{eqn:dissipation_ineqn_sum2}) is non-negative because of both the sector-bounded nonlinearity and positive scalar term $\Lambda$. Therefore, (\ref{eqn:dissipation_ineqn_sum2}) implies
\begin{align}
    \sum_{n=0}^{N} e[n]^Te[n] <\sum_{n=0}^{N} \hat{\gamma}^2 r[n]^Tr[n].
\end{align}

Take $N \rightarrow \infty$ to show $\|e\|_2^2 < \hat{\gamma}^2 \|r\|_2^2$. In all, the $\mathcal{L}_2$ norm of $F_u(L,\Delta)$ is bounded by $\hat{\gamma}$.
\end{proof}
We reformulate the current block in a unitless $F_u(L, \Delta)$ form as
\begin{subequations} \label{eqn:unitless_current-block}
\begin{align}
    \tilde{i}_v^{p}[n+1] &= \tilde{i}_i[n], \quad
    \tilde{i}_e[n] = \tilde{i}_v^{p}[n] - \tilde{i}_i[n] + \tilde{u}[n], \\
    \tilde{i}_v^{p}[n] &= \tilde{i}_v^{p}[n], \quad 
     \tilde{i}_i[n] = \Delta (\tilde{i}_e[n]),
\end{align}
\end{subequations}
where
\begin{align}
     \tilde{u}[n] = \frac{T^{\text{ss}}_{s}}{L} \tilde{v}[n],\quad
     \Delta(z) = -\psi \left(\frac{z}{m_2[n]} \right), 
\end{align}
and $\Delta(z) \in [\hat{\alpha}, \hat{\beta}]$ is a sector-bounded time-varying nonlinearity. 

From Theorem \ref{theorem:lure_gain}, the gain $\hat{\gamma}$ of unitless $F_u(L, \Delta)$ form can be obtained from following optimization problem:
\begin{subequations} \label{optproblem:currentgain}
\begin{align}
\underset{\hat{\gamma}, \lambda, P}{\mathrm{min}}& \,\,\,  \hat{\gamma}^2  \\
\text{subject to} & \,\, P>0, \,  \lambda\ge 0,\,  \hat{\gamma} >0 \label{eqn:opt_layer1_contraint1},\\
    &  \lambda \begin{bmatrix}
        - \hat{\alpha}\hat{\beta} & \hat{\alpha}\hat{\beta} + \frac{\hat{\alpha}+\hat{\beta}}{2}  &  -\hat{\alpha}\hat{\beta} \\
        \hat{\alpha}\hat{\beta} + \frac{\hat{\alpha}+\hat{\beta}}{2}   & - 2\hat{\alpha}\hat{\beta} -\hat{\alpha}-\hat{\beta} & \hat{\alpha}\hat{\beta} + \frac{\hat{\alpha}+\hat{\beta}}{2}    \\
        -\hat{\alpha}\hat{\beta}   & \hat{\alpha}\hat{\beta} + \frac{\hat{\alpha}+\hat{\beta}}{2} &  -\hat{\alpha}\hat{\beta}
    \end{bmatrix} \nonumber \\
    & + \begin{bmatrix}
        1 - P &  0 &  0 \\
        0  & P &   0 \\
        0   & 0 &  - \hat{\gamma}^2
    \end{bmatrix} < 0. \label{eqn:opt_layer1_contraint2}
\end{align}
\end{subequations}
Problem (\ref{optproblem:currentgain}) is in linear matrix inequality (LMI) form, hence is convex and the global minimum exists. By applying an LMI solver (e.g. CVX), we obtain the gain of system (\ref{eqn:unitless_current-block}) as a function of the sector bounds,
\begin{align}  \label{eqn:def_gab}
    \hat{\gamma} = g(\hat{\alpha}, \hat{\beta}).
\end{align}
If there is no interference, \mbox{$\hat{\alpha} = \hat{\beta} = 0$}, system (\ref{eqn:unitless_current-block}) has zero gain, and voltage doesn't affect the current. As $\hat{\alpha}$ decreases and $\hat{\beta}$ increases, system (\ref{eqn:unitless_current-block}) has larger gain, as shown in Fig.\,\ref{fig:Current_block_gain}.

\begin{figure}
    \centering
    \includegraphics[scale = 1]{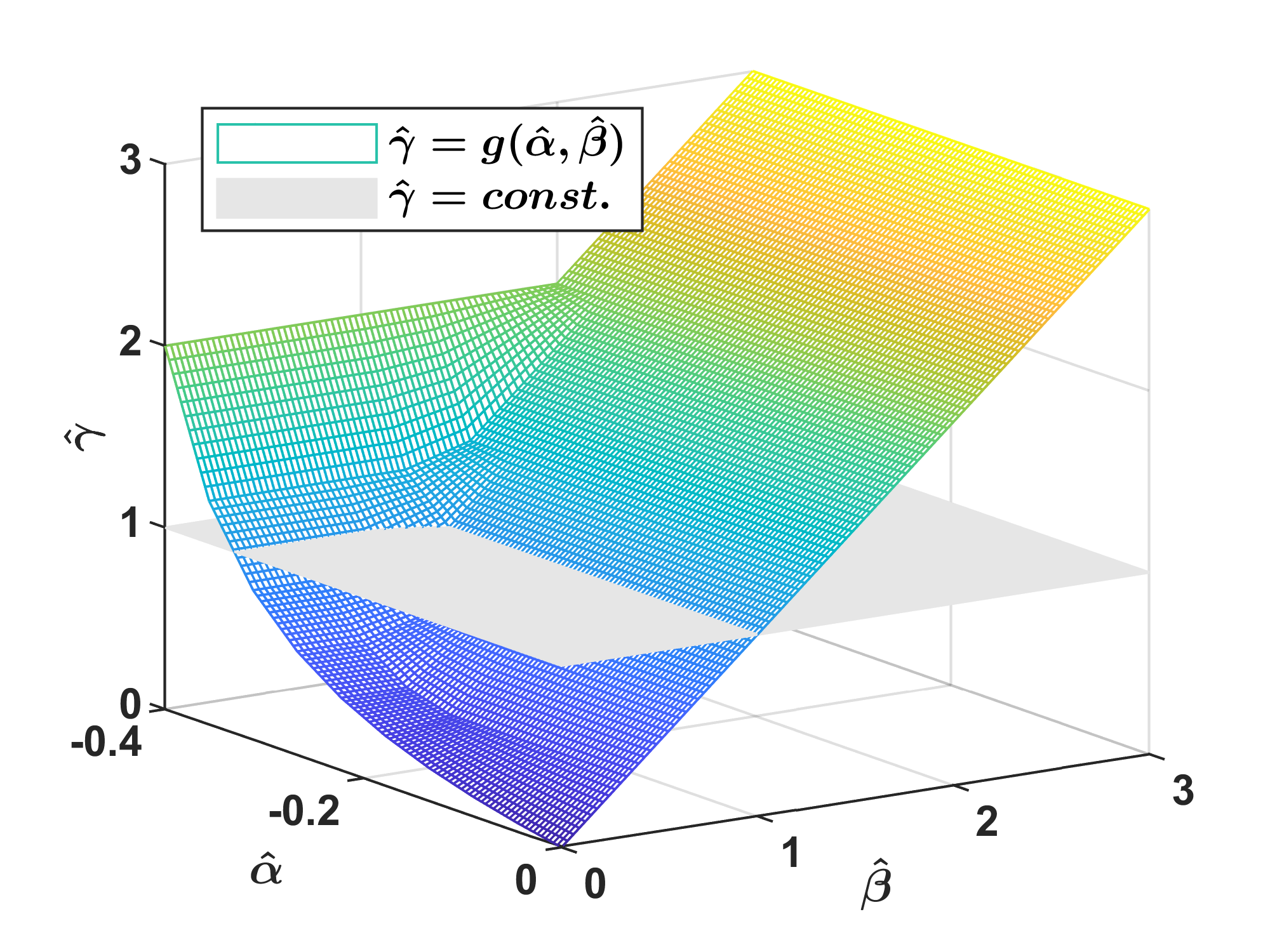}
    \caption{Gain of system (\ref{eqn:unitless_current-block}) as a function of sector bounds. For no interference, $\hat{\alpha} = \hat{\beta} = 0$, system (\ref{eqn:unitless_current-block}) has zero gain, and voltage does not affect the current. As $\hat{\alpha}$ decreases and $\hat{\beta}$ decreases, system (\ref{eqn:unitless_current-block}) has larger gain.}
    \label{fig:Current_block_gain}
\end{figure}
The following corollary follows Theorem\, \ref{theorem:lure_gain}:\\
\begin{corollary}
Given the class $\Sigma$ buck converter modeled by \cite{Cui2018a}, the $\mathcal{L}_2$ gain from the sampled output voltage sequence $\{\tilde{v}[n]\}$ to one-cycle-delayed inductor current sequence $\{\tilde{i}^{p}_v[n]\}$ is bounded from above by
\begin{align}\label{eqn:gamma_vi}
    \Gamma_{v \to i} \le \frac{T_s^{\text{ss}}}{L} g(\hat{\alpha}, \hat{\beta}),
\end{align}
where $T_s^{\text{ss}}$ is the steady-state switching period, 
\begin{align}
    T_s^{\text{ss}} & \triangleq T_{\text{on}} +  T_{\text{off}}.
\end{align}
\end{corollary}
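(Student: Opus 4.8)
The plan is to obtain the corollary as a direct consequence of Theorem~\ref{theorem:lure_gain} applied to the unitless current block~(\ref{eqn:unitless_current-block}), followed by undoing the unit normalization. First I would read off the $F_u(L,\Delta)$ data from~(\ref{eqn:unitless_current-block}): take the state $x[n]=\tilde{i}_v^{p}[n]$, the exogenous input $r[n]=\tilde{u}[n]$, the signal entering the nonlinearity $p[n]=\tilde{i}_e[n]$ with $h[n]=\tilde{i}_i[n]=\Delta(p[n])$, and the performance output $e[n]=\tilde{i}_v^{p}[n]$. Matching~(\ref{eqn:unitless_current-block}) term by term gives the scalar data $A=0$, $B_1=1$, $B_2=0$ (from $\tilde{i}_v^{p}[n+1]=\tilde{i}_i[n]$); $C_1=1$, $D_{11}=-1$, $D_{12}=1$ (from $\tilde{i}_e[n]=\tilde{i}_v^{p}[n]-\tilde{i}_i[n]+\tilde{u}[n]$); and $C_2=1$, $D_{21}=0$, $D_{22}=0$ (from $e[n]=\tilde{i}_v^{p}[n]$).

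Next I would substitute these scalars into the matrix inequality~(\ref{eqn:LMI_constraint}). Its three summands collapse, respectively, to the diagonal block $\mathrm{diag}(-P,\,P,\,-\hat{\gamma}^2)$, the $\lambda$-weighted congruence of the sector matrix by the row vector $[\,1,\,-1,\,1\,]$, and the rank-one block $\mathrm{diag}(1,0,0)$; collecting terms reduces~(\ref{eqn:LMI_constraint}) to the scalar feasibility condition in Problem~(\ref{optproblem:currentgain}). Hence the feasibility problem of Theorem~\ref{theorem:lure_gain} for this block is exactly Problem~(\ref{optproblem:currentgain}), which is an LMI in $(\hat{\gamma}^2,\lambda,P)$ and therefore convex with attained optimum $\hat{\gamma}=g(\hat{\alpha},\hat{\beta})$ as in~(\ref{eqn:def_gab}).

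Then I would verify the hypotheses of Theorem~\ref{theorem:lure_gain} so that it applies: the text has already argued that $\Delta(z)=-\psi(z/m_2[n])$ is an $[\hat{\alpha},\hat{\beta}]$ sector-bounded time-varying map, with $\psi$ built from the interference $w$ as in~(\ref{eqn:psi}); and $F_u(L,\Delta)$ is well-posed because, with $D_{11}=-1$, the loop equation $p[n]=x[n]+r[n]-\Delta(p[n])$ is uniquely solvable under the stated sector bounds. With the hypotheses in force, Theorem~\ref{theorem:lure_gain} yields $\|e\|_2\le\hat{\gamma}\,\|r\|_2$, i.e. $\|\tilde{i}_v^{p}\|_2 \le g(\hat{\alpha},\hat{\beta})\,\|\tilde{u}\|_2$ for every admissible $\{\tilde{v}[n]\}\in\mathcal{L}_2$.

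Finally I would undo the normalization: since $\tilde{u}[n]=(T_s^{\text{ss}}/L)\,\tilde{v}[n]$ is a pure positive scalar gain, $\|\tilde{u}\|_2=(T_s^{\text{ss}}/L)\,\|\tilde{v}\|_2$, so $\|\tilde{i}_v^{p}\|_2\le (T_s^{\text{ss}}/L)\,g(\hat{\alpha},\hat{\beta})\,\|\tilde{v}\|_2$, which is the claimed bound $\Gamma_{v\to i}\le (T_s^{\text{ss}}/L)\,g(\hat{\alpha},\hat{\beta})$ with $T_s^{\text{ss}}=T_{\text{on}}+T_{\text{off}}$. I expect the main obstacle to be bookkeeping rather than concept: getting the signs of $D_{11},D_{12}$ right so that~(\ref{eqn:LMI_constraint}) genuinely collapses to the scalar LMI of Problem~(\ref{optproblem:currentgain}), and, more substantively, confirming that the sector bounds $[\hat{\alpha},\hat{\beta}]$ extracted from the interference model hold \emph{uniformly} along every large-signal trajectory, which is where the off-time constraint~(\ref{eqn:toff_bd}) bounding $m_2[n]$ away from degeneracy enters. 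Once that is pinned down, the corollary follows at once.
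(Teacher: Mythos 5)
Your proposal is correct and follows essentially the same route as the paper: the paper obtains the corollary by casting the unitless current block~(\ref{eqn:unitless_current-block}) as $F_u(L,\Delta)$, invoking Theorem~\ref{theorem:lure_gain} via the LMI feasibility problem~(\ref{optproblem:currentgain}) to get $\hat{\gamma}=g(\hat{\alpha},\hat{\beta})$, and then rescaling by the static input gain $\tilde{u}[n]=(T_s^{\text{ss}}/L)\,\tilde{v}[n]$, exactly as you do. Your explicit identification of the scalar LFT data and the well-posedness remark are details the paper leaves implicit, but they do not change the argument.
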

\subsection{Voltage Block}
The voltage dynamics, which is described by the nonlinear time-invariant system (\ref{eqn:buck_tilde_dyn}), is a nonlinear system with quadratic and fractional nonlinearities. A straightforward mathematical tool does not exist to analytically study the stability and calculate the $\mathcal{L}_2$ gain in closed form. The algorithmic method only applies to a buck converter with specific $L$, $C$, and $R$. Therefore, it is difficult to generalize the algorithmic methods and give engineering intuitions to the designers.

We solve this challenge by taking advantage of the inherent physical constraint of constant on-time buck converters that the cycle-varying off-time is bounded by (\ref{eqn:toff_bd}).
\begin{theorem}
Given the class $\Sigma$ buck converter modeled by \cite{Cui2018a}, the $\mathcal{L}_2$ gain from the one-cycle-delayed inductor current sequence $\{\tilde{i}^{p}_v[n]\}$ to sampled output voltage sequence $\{\tilde{v}[n]\}$ is bounded from above by
\begin{align}
    \Gamma_{i \to v} &  \le \frac{R}{\left( 1 + \frac{T_{\text{on}}}{2\tau_2}\right)} \frac{T_s^{\text{max}}}{T_s^{\text{min}}},
\end{align}
where $T_s^{\text{min}}$ and $T_s^{\text{max}}$ are the shortest switching period and longest switching period, respectively, and $\tau_2$ is the $L/R$ time constant,
\begin{align}
    T_s^{\text{max}} & \triangleq T_{\text{on}} +  T_{\text{off}}^{\text{max}}, \quad
    T_s^{\text{min}}  \triangleq T_{\text{on}} +  T_{\text{off}}^{\text{min}}, \quad
    \tau_2 \triangleq \frac{L}{R}.
\end{align}
\end{theorem}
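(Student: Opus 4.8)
The plan is to collapse the nonlinear voltage recursion (\ref{eqn:buck_tilde_dyn}) into a \emph{scalar, first-order, exponentially contracting} linear time-varying recursion driven by the input sequence $\{\tilde i_v^p[n]\}$, in which the only residual effect of the converter nonlinearity is the cycle-varying coefficient $T_s[n] = T_{\text{on}} + t_{\text{off}}[n]$, which the physical constraint (\ref{eqn:toff_bd}) confines to $[T_s^{\text{min}}, T_s^{\text{max}}]$. The $\mathcal{L}_2$ gain of such a recursion is then governed by the worst-case contraction rate and the worst-case input coefficient over this interval.

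First I would substitute (\ref{eqn:tildeqin1})--(\ref{eqn:tildeqout}) into (\ref{eqn:tildev_dyn}) and regroup the right-hand side by the three signals $\tilde v[n]$, $\tilde i_v^p[n]$, $\tilde i_v^p[n+1]$ together with the nuisance term $\tilde t_{\text{off}}[n]$. Two simplifications do the heavy lifting: (i) the equilibrium relations (\ref{eqn: equlibrium1})--(\ref{eqn: equlibrium2}) give $I_v + M_1 T_{\text{on}}/2 = V_{\text{out}}/R$ (with $M_1 = (V_{\text{in}}-V_{\text{out}})/L$ the nominal rising slope), so the two \emph{additive} occurrences of $\tilde t_{\text{off}}[n]$ --- with coefficient $(I_v + M_1 T_{\text{on}}/2)$ in (\ref{eqn:tildeqin2}) and coefficient $V_{\text{out}}/R$ in (\ref{eqn:tildeqout}) --- cancel identically; and (ii) the $\lambda$-weighted quadratic pieces $\tfrac{1-\lambda^2}{2}\tfrac{T_{\text{on}}^2}{L}$ and $\tfrac{\lambda^2}{2}\tfrac{T_{\text{on}}^2}{L}$ from (\ref{eqn:tildeqin1}) and (\ref{eqn:tildeqin3}) telescope to $\tfrac{T_{\text{on}}^2}{2L}$, which merges with $\tfrac{T_{\text{on}} t_{\text{off}}[n]}{2L}$ from (\ref{eqn:tildeqin2}) into $\tfrac{T_{\text{on}} T_s[n]}{2L}$ and then with $\tfrac{T_s[n]}{R}$ from (\ref{eqn:tildeqout}). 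Dividing by $C$ and using $\tau_2 = L/R$, what survives is
\[
\tilde v[n+1] = a[n]\,\tilde v[n] + f[n], \qquad a[n] = 1 - \frac{T_s[n]}{RC}\left(1 + \frac{T_{\text{on}}}{2\tau_2}\right),
\]
with $f[n] = \tfrac1C\bigl[(1-\lambda)T_{\text{on}} + \tfrac{t_{\text{off}}[n]}{2}\bigr]\tilde i_v^p[n] + \tfrac1C\bigl[\lambda T_{\text{on}} + \tfrac{t_{\text{off}}[n]}{2}\bigr]\tilde i_v^p[n+1]$.

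Next I would certify that $a[n]$ is a uniform contraction and bound the forcing $f$. Because $\tfrac{T_s[n]}{RC}\cdot\tfrac{T_{\text{on}}}{2\tau_2} = \tfrac{T_s[n]T_{\text{on}}}{2LC}$, assumptions (\ref{eqn:assumption_1}) and (\ref{eqn:assumption_2}) give $0 < a[n] < 1$; since $a[n]$ is decreasing in $T_s[n]$, the worst case is $\rho := \sup_n a[n] = 1 - \tfrac{T_s^{\text{min}}}{RC}\bigl(1 + \tfrac{T_{\text{on}}}{2\tau_2}\bigr) \in (0,1)$. Both bracketed coefficients in $f[n]$ are positive and increasing in $t_{\text{off}}[n]$, so each is maximized at $t_{\text{off}}[n] = T_{\text{off}}^{\text{max}}$, and the sum of their suprema equals $\tfrac1C(T_{\text{on}} + T_{\text{off}}^{\text{max}}) = T_s^{\text{max}}/C$. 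Treating $\tilde i_v^p[n+1]$ as the one-step left shift of the same $\mathcal{L}_2$ input (so its norm is bounded by $\|\tilde i_v^p\|_2$) gives $\|f\|_2 \le \tfrac{T_s^{\text{max}}}{C}\|\tilde i_v^p\|_2$. Finally, with the zero-initial-condition convention $\tilde v[0] = 0$ used in Theorem \ref{theorem:lure_gain}, the contraction unrolls to $|\tilde v[n]| \le \sum_{k=0}^{n-1}\rho^{\,n-1-k}|f[k]|$, and Young's inequality for discrete convolutions (with $\sum_{m\ge 0}\rho^m = 1/(1-\rho)$) yields $\|\tilde v\|_2 \le \tfrac{1}{1-\rho}\|f\|_2$, so
\[
\Gamma_{i \to v} \le \frac{1}{1-\rho}\cdot\frac{T_s^{\text{max}}}{C} = \frac{R}{1 + \frac{T_{\text{on}}}{2\tau_2}}\cdot\frac{T_s^{\text{max}}}{T_s^{\text{min}}}.
\]

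I expect the main obstacle to be the algebraic bookkeeping of the first step: collapsing the $\tilde v[n]$ coefficient to the advertised $\tfrac{T_s[n]}{RC}\bigl(1 + T_{\text{on}}/(2\tau_2)\bigr)$ hinges on the equilibrium cancellation of the $\tilde t_{\text{off}}$ terms and on the $\lambda$-telescoping occurring \emph{exactly}, with nothing discarded by hand --- the small-ripple and slow-filter assumptions (\ref{eqn:assumption_1})--(\ref{eqn:assumption_2}) are invoked only afterwards, to place $a[n]$ in $(0,1)$. A secondary subtlety is that $\tilde i_v^p[n+1]$ must be handled as a shifted copy of the same $\mathcal{L}_2$ sequence (the $n = N$ boundary contribution vanishing as $N \to \infty$), not as an independent exogenous signal; otherwise the two feedthrough coefficients cannot legitimately be merged into $T_s^{\text{max}}/C$.
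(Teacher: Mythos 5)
Your proof is correct and arrives at exactly the paper's bound; its first half --- collapsing (\ref{eqn:buck_tilde_dyn}) into the scalar LTV recursion with $\alpha[n]=1-\tfrac{T_s[n]}{RC}\bigl(1+\tfrac{T_{\text{on}}}{2\tau_2}\bigr)$ and input coefficients $\beta[n],\gamma[n]$, then bounding these via (\ref{eqn:toff_bd}) --- coincides with the paper's step (i), including the equilibrium cancellation of the $\tilde t_{\text{off}}$ terms and the $\lambda$-telescoping (the paper's (\ref{eqn:tildeqin3}) writes $i_v^p[n+1]$ where your reduction correctly reads $\tilde i_v^p[n+1]$, consistent with the paper's own LTV form). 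Where you genuinely diverge is the gain estimate. The paper introduces the auxiliary state $q[n]=\tilde v[n]-\gamma[n]\,\tilde i_v^p[n]$ to absorb the direct feedthrough of $\tilde i_v^p[n+1]$, runs a dissipation argument with storage function $\mathcal V=q^2$ and the parameter choice $\mu_1=(1-\alpha[n])/\alpha[n]$ to obtain $\|q\|_2\le\Gamma_1\|\tilde i_v^p\|_2$ with $\Gamma_1=(\beta_{\max}+\alpha_{\max}\gamma_{\max})/(1-\alpha_{\max})$, and then adds $\gamma_{\max}$ back through the output map, giving $(\beta_{\max}+\gamma_{\max})/(1-\alpha_{\max})$ --- the same number you reach. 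You instead keep $\tilde i_v^p[n+1]$ inside the forcing $f[n]$, bound $\|f\|_2\le(\beta_{\max}+\gamma_{\max})\|\tilde i_v^p\|_2$ using shift-invariance of the $\ell_2$ norm, unroll the recursion (valid since $0<\alpha[n]\le\rho<1$ dominates the state-transition products by $\rho^{\,n-1-k}$), and apply Young's convolution inequality with $\sum_m\rho^m=1/(1-\rho)$. Your route is more elementary and skips the state augmentation entirely, but it leans on the explicit solution formula, which is only this clean for a first-order scalar recursion; the paper's storage-function route is heavier here yet stays within the dissipativity/LMI framework of Theorem~\ref{theorem:lure_gain} and would extend to a vector-valued state. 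Both arguments are sound and yield $\Gamma_{i\to v}\le\frac{R}{1+T_{\text{on}}/(2\tau_2)}\cdot\frac{T_s^{\max}}{T_s^{\min}}$.
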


\begin{proof}
(i) \emph{Voltage Block Model Reformulation}\\
The nonlinear time-invariant system (\ref{eqn:buck_tilde_dyn}) can be transformed to the following linear time-varying system
\begin{subequations}
\begin{align}
       \tilde{v}[n+1] &= \alpha[n]\,\tilde{v}[n] +  \beta[n]\,\tilde{i}_v^p[n] + \gamma[n]\,\tilde{i}_v^p[n+1] ,\\
       \alpha[n] &= 1-\frac{T_{\text{on}}+t_{\text{off}}[n]}{RC}-\frac{T_{\text{on}}\left( T_{\text{on}} + t_{\text{off}}[n] \right)}{2LC},\\
       \beta[n] &= \frac{1}{C} \left( (1-\lambda ) T_{\text{on}} + \frac{1}{2} t_{\text{off}}[n] \right), \\
       \gamma[n] &= \frac{1}{C} \left( \lambda T_{\text{on}} + \frac{1}{2} t_{\text{off}}[n] \right).
    \end{align}
\end{subequations}
From (\ref{eqn:toff_bd}), the time-varying coefficients are bounded by
\begin{subequations}
\begin{align}
     0 < \alpha[n] & \le \alpha_{\text{max}} = 1 - \frac{T_{s}^{\text{min}}}{RC} - \frac{T_{\text{on}} T_{s}^{\text{min}}}{2LC},\\
     0 < \beta[n] & \le \beta_{\text{max}} = \frac{1}{C} \left( (1-\lambda) T_{\text{on}} + \frac{1}{2} T_{\text{off}}^{\text{max}} \right), \\
     0 < \gamma[n] & \le \gamma_{\text{max}} = \frac{1}{C} \left( \lambda T_{\text{on}} + \frac{1}{2} T_{\text{off}}^{\text{max}}  \right).
\end{align}
\end{subequations}

We transform the linear time-varying system into standard state-space representation
\begin{subequations}
\begin{align}
    \tilde{v}[n+1] &= \alpha[n]q[n] + \left( \beta[n]+ \alpha[n] \gamma[n] \right) \tilde{i}_v^p[n], \label{eqn:ivq2q} \\ 
    \tilde{v}[n] &=  q[n] + \gamma[n] \tilde{i}_v^p[n].  \label{eqn:ivq2v}
\end{align}
\end{subequations}
(ii) \emph{Gain Estimation} \\
We utilize a storage function $\mathcal{V}[n] = q^2[n]$ to calculate the $\mathcal{L}_2$ gain of the dynamical system (\ref{eqn:ivq2q})
\begin{align} \label{eqn:storage_func_diff}
   \mathcal{V}[n+1]- \mathcal{V}[n] = q^2[n+1] - q^2[n].
\end{align}
From the Cauchy-Schwartz Inequality, given any $\mu_1 > 0$, $ q^2[n+1]$ can be bounded from the above by
\begin{align}
 & q^2[n+1] \nonumber\\
=& \,\, \left( \alpha[n] q[n] + \left(\beta[n] + \alpha[n] \gamma[n]\right) \tilde{i}_v^p[n] \right)^2 \nonumber\\
\le& \,\,  \left( \alpha^2 [n] q^2[n] + (\beta[n] + \alpha[n] \gamma[n])^2
\frac{(\tilde{i}_v^p[n])^2}{\mu_1}\right) \left(1+\mu_1\right).
\end{align}
We let \mbox{$\mu_1 = (1 - \alpha[n]) (\alpha[n])^{-1}$}. It can be verified that \mbox{$\mu_1 > 0$} from (\ref{eqn:assumption_1}) and (\ref{eqn:assumption_2})
\begin{align} \label{eqn:qn_bd}
  & q^2[n+1] \le \alpha[n] q^2[n] + \frac{\left(\beta[n] + \alpha[n] \gamma[n]\right)}{1- \alpha[n]} (\tilde{i}_v^p[n])^2.
\end{align}
Substitute (\ref{eqn:qn_bd}) to (\ref{eqn:storage_func_diff}),
\begin{align}
    & \mathcal{V}[n+1]- \mathcal{V}[n] \le \nonumber \\
& \frac{\left(\beta[n] + \alpha[n] \gamma[n]\right)^2}{1- \alpha[n]} \; (\tilde{i}_v^p[n])^2 - (1 - \alpha[n]) q^2[n] \le  \nonumber\\
 & \frac{\left(\beta_{\text{max}} + \alpha_{\text{max}} \gamma_{\text{max}}\right)^2}{1- \alpha_{\text{max}}}\; (\tilde{i}_v^p[n])^2 - (1 - \alpha_{\text{max}}) q^2[n] =  \nonumber \\
& (1 - \alpha_{\text{max}}) \left( \frac{\left(\beta_{\text{max}} + \alpha_{\text{max}} \gamma_{\text{max}}\right)^2}{(1- \alpha_{\text{max}})^2} \; (\tilde{i}_v^p[n])^2 -  q^2[n] \right).
\end{align}
Summing both sides of the inequality for all $n$ yields
\begin{align}
    \mathcal{V}(\infty)- \mathcal{V}(1) \le (1 - \alpha_{\text{max}})\left(\Gamma_1^2 \|\tilde{i}_v^p\|_2^2 - \|q\|_2^2 \right),
\end{align}
where $\|\cdot\|_2$ is the $\mathcal{L}_2$ norm and 
\begin{align}
    \Gamma_1 = \frac{\left(\beta_{\text{max}} + \alpha_{\text{max}} \gamma_{\text{max}}\right)}{(1- \alpha_{\text{max}})}. \label{eqn:gamma_1_def}
\end{align}
The $\mathcal{L}_2$ norm of $q[n]$ can be bounded by
\begin{align}
    \|q\|^2 & \le \Gamma_1^2  \|\tilde{i}_v^p\|_2^2 + \frac{V(1)-V(\infty)}{(1-\alpha_{\text{max}})} \le \Gamma_1^2 \|\tilde{i}_v^p\|_2^2 + \frac{V(1)}{(1-\alpha_{\text{max}})}. \label{eqn:gain_q_gamma_1}
\end{align}
By definition, the $\mathcal{L}_2$ gain of system ($\ref{eqn:ivq2q}$) is bounded from above by $\Gamma_1$.

The $\mathcal{L}_2$ gain of the system (\ref{eqn:ivq2v}) can be  obtained from the Cauchy-Schwartz Inequality as 
\begin{align}
     & \tilde{v}^2[n] = (q[n] + \gamma[n] \tilde{i}_v^p[n])^2 \le \nonumber \\
     & \left( q^2[n] + \gamma^2[n] (\tilde{i}_v^p[n])^2 \frac{\Gamma_1}{\gamma[n]}\right)\left(1+\frac{\gamma[n]}{\Gamma_1}\right) \le \nonumber \\
      & \left(\Gamma_1 + \gamma[n]\right) \Gamma_1(\tilde{i}_v^p[n])^2 +   \left(\Gamma_1 + \gamma[n]\right) \gamma[n](\tilde{i}_v^p[n])^2 \le  \nonumber\\
      & \left(  \Gamma_1 + \gamma_{\text{max}} \right) (\tilde{i}_v^p[n])^2.
\end{align}
Summing both sides of the inequality for all $n$ yields the $\mathcal{L}_2$ gain of the system (\ref{eqn:ivq2v})
\begin{align}
    \|v\|_2 \le \left(  \Gamma_1 + \gamma_{\text{max}} \right) \|\tilde{i}_v^p\|_2. \label{eqn:gain_i_v}
\end{align}
By (\ref{eqn:gamma_1_def}), the $\mathcal{L}_2$ gain of the voltage block can be bounded from above 
\begin{align} \label{egn:gamma_iv}
    \Gamma_{i \to v} & \le \Gamma_1 + \gamma_{\text{max}} = \frac{\beta_{\text{max}}+\gamma_{\text{max}}}{1-\alpha_{\text{max}}} =  \frac{R}{\left( 1 + \frac{T_{\text{on}}}{2\tau_2}\right)} \frac{T_s^{\text{max}}}{T_s^{\text{min}}},
\end{align}
where $T_s^{\text{min}}$ and $T_s^{\text{max}}$ are the shortest switching period and longest switching period, respectively, and $\tau_2$ is the $L/R$ time constant.
\end{proof}

\subsection{Overall System}
From the Small Gain Theorem \cite{Okuyama2014}, the current-mode buck converter is finite-gain $\mathcal{L}_2$ stable if
\begin{align} \label{eqn:sgt_sys}
    \Gamma_{i\to v} \cdot \Gamma_{v\to i} < 1.
\end{align}
By substituting (\ref{eqn:gamma_vi}) and (\ref{egn:gamma_iv}) into (\ref{eqn:sgt_sys}), (\ref{eqn:sgt_sys}) is equivalent to
\begin{align} \label{eqn:stability_criterion}
    g(\hat{\alpha}, \hat{\beta}) < \left(\tau_2 + \frac{T_{\text{on}}}{2}\right)\frac{T_s^{\text{min}}}{T_s^{\text{max}}} \frac{1}{T_s^{ss}},
\end{align}
where $T_s^{\text{min}}$, $T_s^{\text{max}}$, and $T_s^{\text{ss}}$ are the shortest switching period, longest switching period, and steady-state switching period, respectively.

Because both current and voltage blocks are zero state observable \cite{khalil2002nonlinear}, finite-gain $\mathcal{L}_2$ stability implies the current-mode buck converter is large-signal asymptotically stable. The following corollary follows from Corollary 2:
\begin{corollary}
\label{theore: buck_ontime_stability}
The current control loop of class $\Sigma$ buck converters using constant on-time current-mode control is globally asymptotically stable if 
\begin{align}
    & g(\hat{\alpha}, \hat{\beta}) < \left(\tau_2 + \frac{T_{\text{on}}}{2}\right)\frac{T_s^{\text{min}}}{T_s^{\text{max}}} \frac{1}{T_s^{ss}},\\
   & T_s^{\text{max}}\left( 1 + \frac{T_{\text{on}}}{2\tau_2}\right) < \tau_1,
\end{align}
where $g(\hat{\alpha}, \hat{\beta})$ follows (\ref{eqn:def_gab}), $T_s^{\text{min}}$ and $T_s^{\text{max}}$ are the shortest switching period and longest switching period, respectively,
$\tau_1$ is the $RC$ time constant, and $\tau_2$ is the $L/R$ time constant,
\begin{align}
    T_s^{\text{max}} & \triangleq T_{\text{on}} +  T_{\text{off}}^{\text{max}}, \quad
    T_s^{\text{min}} \triangleq T_{\text{on}} +  T_{\text{off}}^{\text{min}},\\
    \tau_1 &\triangleq RC, \quad \tau_2 \triangleq \frac{L}{R}.
\end{align}
\end{corollary}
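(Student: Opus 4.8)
The plan is to read Corollary~\ref{theore: buck_ontime_stability} as the ``assembly'' step of the feedback-interconnection picture in Fig.~\ref{fig:cmc_modeling}: combine the two block-level $\mathcal{L}_2$-gain estimates already established---$\Gamma_{v\to i}\le (T_s^{\text{ss}}/L)\,g(\hat\alpha,\hat\beta)$ from Corollary~2 and $\Gamma_{i\to v}\le \frac{R}{1+T_{\text{on}}/(2\tau_2)}\,\frac{T_s^{\text{max}}}{T_s^{\text{min}}}$ from Theorem~2---apply the Small Gain Theorem to the loop, and then upgrade the resulting input--output ($\mathcal{L}_2$) stability to global asymptotic stability of the origin using zero-state observability of the two blocks. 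The two displayed inequalities of the corollary will appear as, respectively, (a) the small-gain product inequality rewritten in physical parameters and (b) the admissibility condition that legitimizes the bound of Theorem~2.

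The first substantive step is to show that hypothesis (b), $T_s^{\text{max}}(1+T_{\text{on}}/(2\tau_2))<\tau_1$, is exactly what Theorem~2 requires. Its proof uses the splitting parameter $\mu_1=(1-\alpha[n])/\alpha[n]$ and therefore needs $0<\alpha[n]<1$ in every cycle, where $\alpha[n]=1-\tfrac{T_s[n]}{RC}-\tfrac{T_{\text{on}}T_s[n]}{2LC}$ with $T_s[n]=T_{\text{on}}+t_{\text{off}}[n]$. Rearranging, $\alpha[n]>0$ is equivalent to $T_s[n]\big(\tfrac{1}{R}+\tfrac{T_{\text{on}}}{2L}\big)<C$, i.e. to $T_s[n]\big(1+\tfrac{T_{\text{on}}}{2\tau_2}\big)<\tau_1$; taking the worst case over the physical bound (\ref{eqn:toff_bd}), so that $T_s[n]\le T_s^{\text{max}}$, this is precisely (b), while $\alpha[n]<1$ is automatic from $T_s[n]>0$. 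Thus (b) sharpens the order-of-magnitude assumptions (\ref{eqn:assumption_1})--(\ref{eqn:assumption_2}) into a genuine admissibility condition, and under it the coefficient bounds $0<\alpha[n]\le\alpha_{\text{max}}$, $0<\beta[n]\le\beta_{\text{max}}$, $0<\gamma[n]\le\gamma_{\text{max}}$ and the estimate $\Gamma_{i\to v}\le \frac{R}{1+T_{\text{on}}/(2\tau_2)}\,\frac{T_s^{\text{max}}}{T_s^{\text{min}}}$ of Theorem~2 all hold.

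Next I would multiply the two gains. Using $R/L=1/\tau_2$,
\begin{align*}
\Gamma_{i\to v}\,\Gamma_{v\to i}\ \le\ \frac{T_s^{\text{ss}}\,g(\hat\alpha,\hat\beta)}{\tau_2+T_{\text{on}}/2}\cdot\frac{T_s^{\text{max}}}{T_s^{\text{min}}},
\end{align*}
so the small-gain condition $\Gamma_{i\to v}\Gamma_{v\to i}<1$ is implied by $g(\hat\alpha,\hat\beta)<\big(\tau_2+\tfrac{T_{\text{on}}}{2}\big)\tfrac{T_s^{\text{min}}}{T_s^{\text{max}}}\tfrac{1}{T_s^{\text{ss}}}$, which is hypothesis (a). The feedback interconnection of Fig.~\ref{fig:cmc_modeling} is well posed: the inner $\Delta$-loop is well posed by the standing assumption on $F_u(L,\Delta)$, and the outer current-block/voltage-block loop contains a one-cycle delay ($\tilde i_v^p[n+1]$ depends on $\tilde v[n]$), so no algebraic loop arises. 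The Small Gain Theorem \cite{Okuyama2014} then yields finite-gain $\mathcal{L}_2$ stability of the closed loop.

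Finally I would pass from $\mathcal{L}_2$ stability to global asymptotic stability of the unforced origin. Each block is zero-state observable: with zero input, $\tilde i_v^p[n]\equiv 0$ forces the (scalar) current-block state to zero, and $\tilde v[n]\equiv 0$ forces $q[n]=\tilde v[n]-\gamma[n]\tilde i_v^p[n]\equiv 0$ by (\ref{eqn:ivq2v}); by the standard result relating finite-gain $\mathcal{L}_2$ stability and zero-state observability to Lyapunov stability \cite{khalil2002nonlinear}, the origin is globally asymptotically stable---``global'' because every estimate invoked (the LMI gain $g(\hat\alpha,\hat\beta)$ derived from the global sector bound on $\Delta$, and the coefficient bounds obtained from the always-valid constraint (\ref{eqn:toff_bd})) holds on all of state space. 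I expect the main obstacle to be making this last implication airtight for the \emph{time-varying} voltage block and the sector-nonlinear current block---the Small Gain Theorem by itself delivers only input--output stability---together with verifying that (b) is genuinely equivalent to $\alpha[n]\in(0,1)$ for every cycle rather than merely consistent with the order-of-magnitude assumptions (\ref{eqn:assumption_1})--(\ref{eqn:assumption_2}).
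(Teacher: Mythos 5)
Your proposal is correct and follows essentially the same route as the paper: the paper's argument is precisely the small-gain product of the two block gain bounds (Corollary 1 and Theorem 2) combined with zero-state observability to upgrade finite-gain $\mathcal{L}_2$ stability to global asymptotic stability, and your algebra reproducing condition (a) matches. Your identification of condition (b) as the requirement $\alpha[n]>0$ (equivalently $T_s[n](1+T_{\text{on}}/(2\tau_2))<\tau_1$ in the worst case $T_s[n]=T_s^{\text{max}}$) is exactly the role that hypothesis plays, which the paper leaves implicit in its remark that $\mu_1>0$ follows from the standing assumptions.
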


\section{Modeling and Stability of Constant Off-Time Boost Converter}
The modeling and stability of the constant off-time boost converters can be studied in the similar manners. 

\subsection{Current-Mode Boost Converter Using Constant Off-Time}
Consider a class $\Sigma$ boost converter using constant off-time defined in \cite{Cui2018a}, the on-time in steady state is denoted by $T_{\text{off}}$.
The time-varying on-time is bounded from above by
\begin{align} \label{eqn:toff_bd_boost}
    T^{\text{min}}_{\text{on}} \le t_{\text{on}}[n] \le T^{\text{max}}_{\text{on}}.
\end{align}
The large-signal stability guarantee is shown in Proposition\,\ref{theore: boost_offtime_stability_cotcmboost}. The detailed proof can be found in \cite{Avestruz2022}:

\begin{proposition}\label{theore: boost_offtime_stability_cotcmboost}
The current control loop of the class $\Sigma$ boost converter using constant off-time control is globally asymptotically stable

\vspace{+8pt}
\noindent (i) if $\left((1-\lambda ) T_{\text{off}} + \frac{V_{\text{out}}L}{V_{\text{in}}R} \right)\left( 1-\frac{T_s^{\text{ss}}}{RC} -\frac{T_{s}^{\text{max}}}{RC}-\frac{ T^2_{\text{off}}}{2LC}\right) + \left( \lambda T_{\text{off}} - \frac{V_{\text{out}}L}{V_{\text{in}}R} \right) \ge 0 $\:\:and
\begin{align}
    g(\hat{\alpha}, \hat{\beta}) &  \le 
    \frac{1}{2} + \tau_2
   \left(\frac{T_s^{\text{ss}}+T_s^{\text{min}}}{T_s^{\text{ss}}T_{\text{off}}} \right);
\end{align}

\noindent or (ii) if $\left((1-\lambda ) T_{\text{off}} + \frac{V_{\text{out}}L}{V_{\text{in}}R} \right)\left( 1-\frac{T_s^{\text{ss}}}{RC} -\frac{T_{s}^{\text{max}}}{RC}-\frac{ T^2_{\text{off}}}{2LC}\right) + \left( \lambda T_{\text{off}} - \frac{V_{\text{out}}L}{V_{\text{in}}R} \right) < 0 $\:\:and
\begin{align}
    &g(\hat{\alpha}, \hat{\beta}) \le\\\nonumber
    &\frac{2\tau_2\left( T_s^{\text{min}} +  T_s^{\text{ss}}\right) + T_{\text{off}}^2}{2\tau_2\left( T_s^{\text{max}} +  T_s^{\text{ss}}\right) + T_{\text{off}}^2} \frac{2\tau_1-T_s^{\text{ss}}-T_s^{\text{max}}-\frac{T_{\text{off}}^2}{2\tau_2}}{2\frac{V_{\text{out}}}{V_{\text{in}}}+(1-2\lambda)\frac{T_{\text{off}}}{\tau_2}}T_{\text{off}};
\end{align}
where $g(\hat{\alpha}, \hat{\beta})$ follows Fig.\,\ref{fig:Current_block_gain}, $T_s^{\text{min}}$ and $T_s^{\text{max}}$ are the shortest switching period and longest switching period, respectively, $\tau_1$ is the $RC$ time constant, and $\tau_2$ is the $L/R$ time constant,
\begin{align}
    T_s^{\text{max}} & \triangleq T_{\text{on}} +  T_{\text{off}}^{\text{max}}, \\
    T_s^{\text{min}} & \triangleq T_{\text{on}} +  T_{\text{off}}^{\text{min}},\\
    \tau_1 &\triangleq RC, \quad 
    \tau_2 \triangleq \frac{L}{R}.
\end{align}
\end{proposition}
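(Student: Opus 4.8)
The plan is to mirror the constant on-time buck analysis: decompose the constant off-time boost converter into a current block and a voltage block, upper-bound the $\mathcal{L}_2$ gain of each, and then close the loop with the Small Gain Theorem, exactly as in (\ref{eqn:sgt_sys})–(\ref{eqn:stability_criterion}). Since the detailed proof is deferred to \cite{Avestruz2022}, here I only sketch the structure.

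\emph{Current block.} First I would write the translated-to-origin large-signal current dynamics for the constant off-time boost in analogy with (\ref{eqn:buck_i_tilde_dyn}); because constant off-time fixes $T_{\text{off}}$, the structural roles of on-time and off-time are interchanged relative to the buck, but the interference still enters through the same shifted nonlinearity $\psi$ of (\ref{eqn:psi}). After recasting the block in the unitless $F_u(L,\Delta)$ form, Theorem \ref{theorem:lure_gain} and the LMI (\ref{optproblem:currentgain}) apply verbatim, so $\Gamma_{v\to i}$ is bounded by a time-constant/switching-period scaling factor times the same function $g(\hat{\alpha},\hat{\beta})$ from (\ref{eqn:def_gab}); the scaling factor is read directly off the unitless reformulation.

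\emph{Voltage block.} Next I would derive the output-capacitor charge-balance model for the boost topology: during the on-time the capacitor only discharges into the load, while during the off-time it additionally receives the inductor current ramp, so the per-cycle charge injection differs from (\ref{eqn:buck_tilde_dyn}) and one contribution carries the opposite sign. Translating to the origin yields a linear time-varying recursion $\tilde v[n+1]=\alpha[n]\tilde v[n]+\beta[n]\tilde i_v^p[n]+\gamma[n]\tilde i_v^p[n+1]$ whose coefficients are affine/quadratic in $t_{\text{on}}[n]$; the physical bound (\ref{eqn:toff_bd_boost}) then bounds $\alpha[n],\beta[n],\gamma[n]$, with $\alpha_{\max}<1$ secured by the ``$2\tau_1-T_s^{\text{ss}}-T_s^{\text{max}}-T_{\text{off}}^2/2\tau_2$'' type positivity implicit in the statement (the analogue of the buck's $T_s^{\max}(1+T_{\text{on}}/2\tau_2)<\tau_1$). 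Changing coordinates to $q[n]=\tilde v[n]-\gamma[n]\tilde i_v^p[n]$, using the storage function $\mathcal{V}[n]=q^2[n]$, and applying Cauchy–Schwarz with the tuned parameter $\mu_1=(1-\alpha[n])/\alpha[n]$ (positive by the small-ripple/slow-$RC$ assumptions (\ref{eqn:assumption_1})–(\ref{eqn:assumption_2})), then summing over $n$, gives the $q$-subsystem gain $\Gamma_1=(\beta+\alpha\gamma)_{\sup}/(1-\alpha)_{\inf}$, and a second Cauchy–Schwarz step yields $\Gamma_{i\to v}\le\Gamma_1+\gamma_{\max}$. The crucial branch is the supremum $(\beta[n]+\alpha[n]\gamma[n])_{\sup}$ over $[T_{\text{on}}^{\min},T_{\text{on}}^{\max}]$: this numerator is a quadratic in $t_{\text{on}}[n]$ attaining its maximum at an endpoint, and the sign of its slope at that endpoint is precisely $\bigl((1-\lambda)T_{\text{off}}+\tfrac{V_{\text{out}}L}{V_{\text{in}}R}\bigr)\bigl(1-\tfrac{T_s^{\text{ss}}}{RC}-\tfrac{T_s^{\text{max}}}{RC}-\tfrac{T_{\text{off}}^2}{2LC}\bigr)+\bigl(\lambda T_{\text{off}}-\tfrac{V_{\text{out}}L}{V_{\text{in}}R}\bigr)$, giving case (i) when this is nonnegative and case (ii) otherwise.

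\emph{Closing the loop.} Finally I would invoke the Small Gain Theorem $\Gamma_{i\to v}\cdot\Gamma_{v\to i}<1$, substitute the two gain bounds, and algebraically rearrange into the two displayed inequalities on $g(\hat{\alpha},\hat{\beta})$; zero-state observability of both blocks (as argued for the buck) then upgrades finite-gain $\mathcal{L}_2$ stability to global asymptotic stability. The main obstacle I expect is the voltage-block step: correctly deriving the boost charge-balance coefficients (the sign flip on one charge term is easy to get wrong), verifying $\mu_1>0$ and $\alpha_{\max}<1$ under the stated hypotheses, and carrying out the endpoint/sign analysis cleanly so that it collapses to exactly the stated dichotomy — everything else is bookkeeping parallel to the constant on-time buck proof.
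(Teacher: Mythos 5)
Your sketch is exactly the paper's intended argument: the paper provides no proof of this proposition beyond deferring to the cited reference and noting that the boost is handled ``in the similar manners'' as the constant on-time buck, and your outline reproduces that buck proof (LMI-based current-block gain $g(\hat{\alpha},\hat{\beta})$, LTV voltage-block reformulation with storage function $q^2$ and the two Cauchy--Schwarz steps, small-gain closure upgraded by zero-state observability) with the right boost-specific modifications, correctly attributing the case split to the sign of the combined coefficient involving $\beta[n]$, $\alpha[n]$, and $\gamma[n]$, which can go negative for the boost because of its non-minimum-phase charge delivery. Since neither you nor the paper carries out the boost charge-balance algebra in this document, your outline is at the same level of detail as the paper itself and is structurally sound.
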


\section{Simulation Validation}
The proposed stability criterion was validated by a high-fidelity switch-circuit simulation model in MATLAB/Simulink.
The parameters of the constant on-time buck converter system in the validation are shown in Table\,\ref{table:cotcm_buck_param}.
Two case studies were performed with different resistive load conditions as shown in Table\,\ref{table:cont_on_cm_stability_criterion}.
The inductor currents were controlled so that the steady-state output voltages were kept the same.
The inadequacy of the existing stability criterion \cite{cmpartone2022} is illustrated for Case 1 in Fig.~ \ref{fig: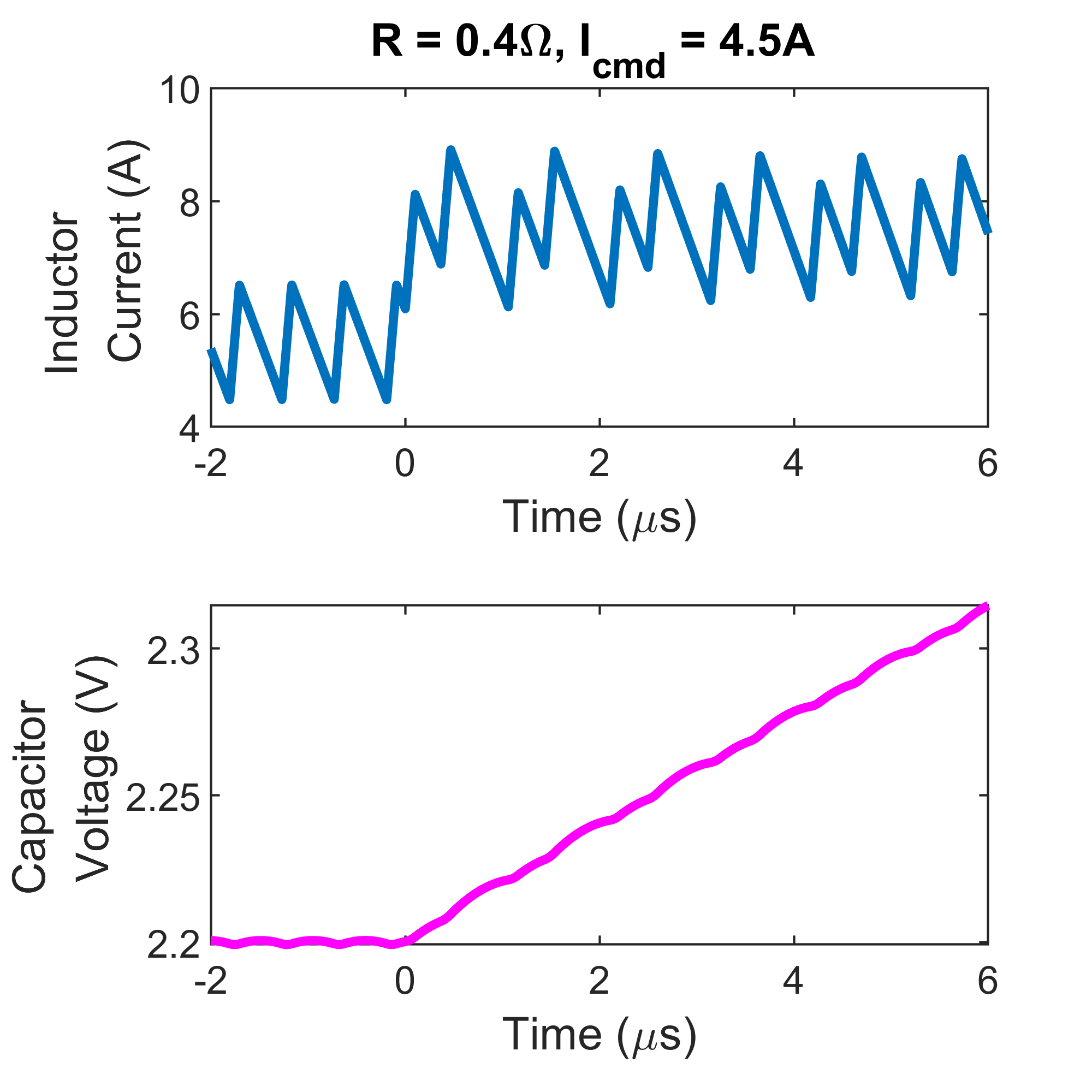}.
The proposed stability criterion,  which guarantees the large-signal stability of the power converter system, is demonstrated for Case 2 in Fig.~\ref{fig: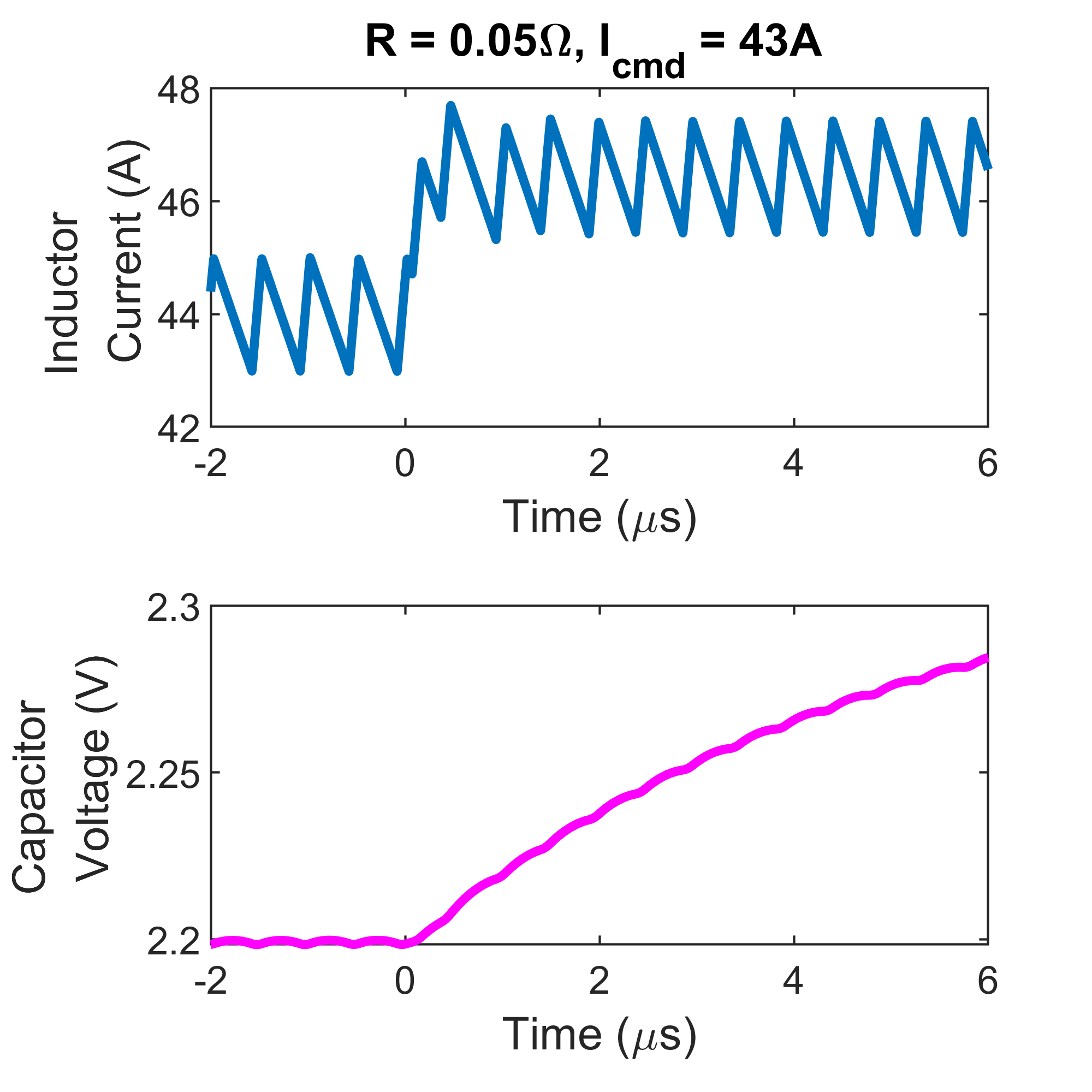}. 


\begin{table}[tb]
    \caption{Design Parameters of the Constant On\nobreakdash-Time Current\nobreakdash-Mode Buck Converter}
    \label{table:cotcm_buck_param}
    \centering
    \begin{tabular}{cccccccccccccc}
        \toprule

\textbf{Param.}&\textbf{Values}&\textbf{Param.}&\textbf{Values}&\textbf{Param.}&\textbf{Values}\\
\midrule
$\boldsymbol{V_{\textbf{in}}}$
& 12\,V & $\boldsymbol{L}$ & 240\,nH & $\boldsymbol{T_{\textbf{on}}}$ 
& 100\,ns \\
\midrule
$\boldsymbol{V_{\textbf{out}}}$
& 2.2\,V & $\boldsymbol{C}$ & 100 $\mu$F & $\boldsymbol{R_s}$ & 10\,m$\Omega$\\
\bottomrule
    \end{tabular}
\end{table}

\begin{table}[tb]
    \caption{Validations of the proposed stability criterion (\ref{eqn:stability_criterion})}
    \label{table:cont_on_cm_stability_criterion}
    \centering
    \begin{tabular}{cccccccccccc}
        \toprule
\textbf{Case} & $\mathbf{\emph{R}}$ & $\mathbf{\emph{I}_{\text{cmd}}}$ & $\mathbf{|\hat{\alpha}|}$ & \textbf{Corollary 1} \cite{cmpartone2022} & \textbf{(\ref{eqn:stability_criterion})}\\
\midrule
        1 & $0.4\,\Omega$ 
        & 4.5 A
        & 0.48
        & $|\hat{\alpha}| < 0.5$ & $|\hat{\alpha}| < 0.24$ \\
\midrule
        2 & $0.05\,\Omega$ 
        & 43 A
        & 0.3
        & $|\hat{\alpha}|< 0.5$ & $|\hat{\alpha}|< 0.44$ \\
\bottomrule
    \end{tabular}
\end{table}

\begin{figure}
    \centering
    \includegraphics[width = 5cm]{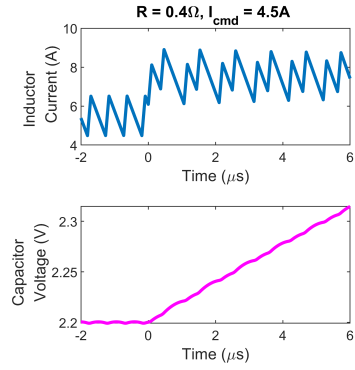}
    \caption{In case study 1, for $R = 0.4\,\Omega$ and $I_{\text{cmd}} = 4.5\,\text{A}$, according to stability criterion (\ref{eqn:stability_criterion}), the current control loop is stable if the interference is sector-bounded by $|\hat{\alpha}| < 0.24$. In the Simulink simulation, interference with sector bound $|\hat{\alpha}| = 0.48$ is added to the inductor current measurement; the inductor current waveform is {\em unstable} for a current step.
    This result shows that the expectation of stability from pre-existing theory that ignores the dependence of the current ramp on the output voltage (corollary 1 in \cite{cmpartone2022}) is inadequate.}
    \label{fig:alpha_0_48.png}
\end{figure}
\begin{figure}
    \centering
    \includegraphics[width = 5cm]{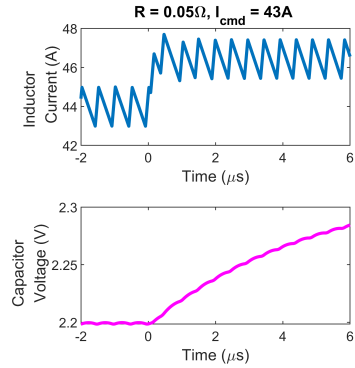}
    \caption{In case study 2, for $R = 0.05\,\Omega$ and $I_{\text{cmd}} = 43\,\text{A}$, according to the proposed stability criterion (\ref{eqn:stability_criterion}), the current control loop is stable when the interference is sector-bounded by $|\hat{\alpha}| < 0.44$. In the Simulink simulation, the interference with sector bound $|\hat{\alpha}| = 0.3$ is added to the inductor current measurement; the inductor current waveform is {\em stable} for a current step.}
    \label{fig:alpha_0_30.png}
\end{figure}

\section{Conclusion}
The theoretical contribution of this paper provides an analytical and practical stability criterion for designing current-mode dc-dc converters with large-signal stability guarantees.
The criteria indicate that the $L/R$ and $RC$ time constants are the design parameters which determine the amount of coupling between the current block and voltage block. The current block and voltage block can be decoupled by increasing $L/R$, $RC$, or $T_s^{\text{min}}$, or by decreasing $T_s^{\text{max}}$.

\section*{Acknowledgements}
Special gratitude to Professor Peter Seiler for his suggestions in the large-signal stability analysis.

\bibliographystyle{ieeetr}
\bibliography{library_fixed.bib}
\end{document}